\newtheorem{theorem}{Theorem}[section]
\newtheorem{proposition}[theorem]{Proposition}
\newtheorem{lemma}[theorem]{Lemma}
\newtheorem{corollary}[theorem]{Corollary}
\newtheorem{definition}[theorem]{Definition}
\newtheorem{remark}[theorem]{Remark}
\numberwithin{equation}{section}
\newcommand{\C}{{\mathbb{C}}}
\newcommand{\Nc}{\mathcal{N}}
\newcommand{\R}{{\mathbb{R}}}
\newcommand{\Sb}{\mathbb{S}}
\newcommand{\Z}{{\mathbb{Z}}}
\newcommand{\subjclass}[2][]{%
  \let\@oldtitle\@title%
  \gdef\@title{\@oldtitle\footnotetext{#1 \emph{MSC.} #2}}%
}
\newcommand{\keywords}[1]{%
  \let\@@oldtitle\@title%
  \gdef\@title{\@@oldtitle\footnotetext{\emph{Keywords:} #1.}}%
}
\begin{document}

\title{The gap between the null space property and the restricted isometry property}

\author{Jameson Cahill\footnote{Department of Mathematics,
New Mexico State University, Las Cruces, NM, USA. } \qquad Xuemei Chen\footnote{Department of Mathematics,  New Mexico State University, Las Cruces, NM, USA. X. Chen was supported by NSF ATD 1321779.}\qquad Rongrong Wang\footnote{Department of Mathematics, University of British Columbia, Vancouver, BC, Canada. 
R. Wang is funded in part by an NSERC
Collaborative Research and Development Grant DNOISE II (22R07504).}}

%
%
%

\subjclass{15A12, 15A29}

\date{\today}

\keywords{Compressed sensing, sparse representation, stability, RIP-NSP}
\maketitle
\begin{abstract}
The null space property (NSP) and the restricted isometry property (RIP) are two properties which have received considerable attention in the compressed sensing literature.  As the name suggests, NSP is a property that depends solely on the null space of the measurement procedure and as such, any two matrices which have the same null space will have NSP if either one of them does.  On the other hand, RIP is a property of the measurement procedure itself, and given an RIP matrix it is straightforward to construct another matrix with the same null space that is not RIP.  
We say a matrix is RIP-NSP if it has the same null space as an RIP matrix. We show that such matrices can provide robust recovery of compressible signals under Basis pursuit. More importantly, we constructively show that the RIP-NSP is stronger than NSP with the aid of this robust recovery result, which shows that RIP is fundamentally stronger than NSP.
\end{abstract}

\vspace{0.35in}
\section{Introduction}
Let $x\in\R^N$ be an unknown signal that belongs to or lives close to the set $\Sigma_s:=\{w\in\R^N:\|w\|_0:=\#(\text{supp(w))}\leq s\}$ in the sense that $\sigma_s(x):=\min_{y\in\Sigma_s}\|x-y\|_1$ is small. The conventional \emph{compressed sensing} problem is concerned with the task of estimating $x$ provided that we know the sensing procedure $\Phi\in\R^{M\times N}$, and we are given the (possibly perturbed) measurement data $y=\Phi x+e$ with $\|e\|_2\leq\epsilon$. The challenging part is that the number of measurements $M$ is far less than the ambient dimension $N$. In this paper, we focus on the $\ell_1$ minimization method (Basis Pursuit): 
\begin{equation}\label{equ:le}
\hat x=\arg\min\|\bar x\|_1\quad\quad \text{subject to }\quad\quad \|\Phi\bar x -y\|_2\leq\epsilon,
\end{equation}
which is a well established reconstruction scheme \cite{CT05}.

A significant portion of the compressed sensing literature is devoted to the design of the sensing matrix $\Phi$ to guarantee the success of \eqref{equ:le}. More specifically, we want the recovery to be stable and robust in the following way:
\begin{equation}\label{equ:stable}
\|\hat x-x\|_2\leq C_1\epsilon+C_2 \sigma_s(x),
\end{equation}
where $C_1, C_2$ are constants that do not grow as dimension grows.

 The restricted isometry property (RIP) and the null space property (NSP) (see definitions \ref{def:nsp} and \ref{def:RIP} respectively) are two common conditions that one imposes on $\Phi$ in order to get a stable recovery via \eqref{equ:le}. The RIP condition with the RIP constant $\delta_{2s}<1/\sqrt{2}$ is sufficient to recover $x$ (with stability and robustness as in \eqref{equ:stable}, see Theorem \ref{thm:cai}), so in what follows, when we say RIP, we really mean RIP with $\delta_{2s}\leq 1/\sqrt{2}$. 
 
The two conditions have their pros and cons:

\begin{enumerate}
\item[(1)] NSP is a property that only depends on the kernel of the sensing matrix $\Phi$, and RIP is not.
\item[(2)] $s$-NSP is a sufficient and necessary condition for any $x\in\Sigma_s$ to be exactly recovered by \eqref{equ:le} in the noise free case (when $\epsilon=0$).
\item[(3)] RIP provides a very stable and robust error bound as we will see in Theorem \ref{thm:cai}. It is not known that NSP guarantees such recovery.
\item[(4)] Numerically verifying  RIP and NSP  are both NP-hard, but RIP is easier for theoretical justification in most cases. RIP (or RIP like conditions)  also has a broader application to various sparse recovery algorithms.

\end{enumerate}

Note that the solution set of $\Phi x=y$ only depends on the null space of $\Phi$, and any row action does not alter this set. Practically, the row operation corresponds to a mixing or transformation of the current data vector. If this transformation  is invertible and the measurement is noiseless, then no new data is introduced or no old data is lost, and one should expect the same recovery. As such, it is natural to expect that a good reconstruction criteria should be able to reflect this invariant property. Moreover, it is also desirable to have a property that guarantees robust recovery as in \eqref{equ:stable}, so that the Basis Pursuit is not affected significantly by noisy measurements, and is able to recover compressible  signals (signals that have small $\sigma_s(x)$) approximately. In other words, we wish to have a property that combines the advantages of RIP and NSP so that:
\begin{enumerate}
\item[(A)] This property only depends on the null space of the sensing matrix.
\item[(B)] This property guarantees stable and robust recovery similar to \eqref{equ:stable}.
\end{enumerate}

There have been several attempts in the literature: the width property (WP) \cite{KT07}, the robust null space property (RNSP)~\cite{F14}, and the robust width property (RWP)~\cite{CM14} are all properties on  sensing matrices which guarantee the success of Basis Pursuit. However, none of these properties simultaneously achieve both (A) and (B). A more elaborated discussion about these properties can be found in Section \ref{sec:ripnsp}. 

The first contribution of this paper is to provide a condition that only depends on the null space (thus achieves (A)), and also achieves (B) under certain settings.  The condition we propose is a straightforward hybrid of RIP and NSP:
\begin{definition}[$k$-RIP-NSP]
A matrix $\Phi$ has $k$-RIP-NSP with $\delta_k$ if $\Phi$ has the same null space of a  matrix that has RIP with $\delta_k$.
\end{definition}

\noindent Which provides the following guarantee for Basis Pursuit:
\begin{theorem}\label{thm:sta} Suppose $\Phi$ is $M\times N$ and has RIP-NSP with $\delta_{2s}<\frac{1}{\sqrt 2}$. Given $x\in\R^N$, and $\|y-\Phi x\|_2\leq\epsilon$, then the minimizer $\hat x$ of \eqref{equ:le} satisfies
\begin{equation}\label{equ:ripnsp}
\|\hat x-x\|_2\leq D_1\frac{\sqrt{1+\delta_{2s}}}{\lambda(\Phi)}\sqrt{\frac{N}{s}}\epsilon+D_2\frac{\sigma_s(x)}{\sqrt{s}},
\end{equation}
where $\lambda(\Phi)$ is the smallest positive singular value of $\Phi$, and $D_1, D_2$ only depend on $\delta_{2s}$ and can be found in Theorem \ref{thm:cai}.
\end{theorem}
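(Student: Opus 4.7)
The plan is to transfer the stability estimate from $\Phi$ to an RIP matrix $A$ with $\ker A = \ker\Phi$ and $\delta_{2s}(A) < 1/\sqrt{2}$. Since $A$ and $\Phi$ share a null space, their row spaces coincide, so every row of $A$ lies in the row space of $\Phi$ and we may write
\[
A \;=\; A\,\Phi^\dagger\,\Phi \;=\; B\,\Phi, \qquad B := A\,\Phi^\dagger,
\]
where $\Phi^\dagger$ denotes the Moore-Penrose pseudoinverse. This factorization is the algebraic core of the argument: it converts $\Phi$-measurements into $A$-measurements via a linear multiplier $B$ whose operator norm will ultimately drive the noise amplification.

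Next, set $h := \hat x - x$. Because $x$ is itself feasible for \eqref{equ:le} (its residual satisfies $\|\Phi x - y\|_2 = \|e\|_2 \leq \epsilon$), the $\ell_1$-optimality of $\hat x$ gives $\|\hat x\|_1 \leq \|x\|_1$, and the triangle inequality gives $\|\Phi h\|_2 \leq 2\epsilon$, so
\[
\|A h\|_2 \;=\; \|B\,\Phi h\|_2 \;\leq\; 2\,\|B\|_2\,\epsilon.
\]
I would now invoke Theorem \ref{thm:cai} applied to $A$. The standard Cai-Zhang argument derives the RIP stability bound from exactly two inputs: the $\ell_1$ inequality $\|\hat x\|_1 \leq \|x\|_1$ and an explicit upper bound on $\|Ah\|_2$. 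Both are in hand, so the conclusion for $A$ reads
\[
\|h\|_2 \;\leq\; 2\,D_1\,\|B\|_2\,\epsilon \;+\; D_2\,\frac{\sigma_s(x)}{\sqrt{s}}.
\]

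Finally, estimate $\|B\|_2 \leq \|A\|_2/\lambda(\Phi)$, where $\lambda(\Phi)$ is the smallest positive singular value of $\Phi$. To bound $\|A\|_2$, I would partition $\{1,\dots,N\}$ into $\lceil N/s\rceil$ blocks of size at most $s$, apply the $s$-RIP of $A$ (using $\delta_s \leq \delta_{2s}$) on each block, and combine via Cauchy-Schwarz to obtain
\[
\|A\|_2 \;\leq\; \sqrt{(1+\delta_{2s})\,\lceil N/s\rceil},
\]
from which \eqref{equ:ripnsp} follows. The only delicate point is the invocation of Theorem \ref{thm:cai}: because $\hat x$ is not a priori the $\ell_1$-minimizer of any program built from $A$, one must confirm that the Cai-Zhang proof genuinely uses only the inequality $\|\hat x\|_1 \leq \|x\|_1$ together with a bound on $\|Ah\|_2$, rather than a stronger optimality condition. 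I expect this reading to go through, in which case the factor $\sqrt{N/s}$ appearing in the final estimate is simply the unavoidable price of promoting the sparse RIP control of $A$ to a full operator-norm bound.
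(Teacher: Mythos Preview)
Your proof is correct and follows essentially the same route as the paper: factor through an RIP matrix $A$ sharing the null space of $\Phi$, transfer the noise bound via $\|Ah\|_2 \leq (\|A\|_{op}/\lambda(\Phi))\,\|\Phi h\|_2$, invoke the Cai--Zhang argument (which, as you correctly flag and the paper also emphasizes, uses only $\|\hat x\|_1\le\|x\|_1$ together with a residual bound), and control $\|A\|_{op}$ by RIP on a block partition. The paper packages this as a general transfer proposition and writes the factorization as $\Phi=UA$ with $U$ invertible rather than your pseudoinverse form $A=A\Phi^\dagger\Phi$, and it partitions into blocks of size $2s$ rather than $s$ (tightening the constant slightly), but these differences are cosmetic.
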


Admittedly, the stability result is not as strong as \eqref{equ:stable}. Since the RIP-NSP condition has no control of the scaling nor the condition number of $\Phi$, we do not expect a substantial improvement over \eqref{equ:ripnsp}.  However, we would like to argue in certain settings this is a reasonable constant.
In the RIP regime, say if $\Phi$ is a  Gaussian matrix normalized by $\sqrt M$, the term containing $D_1$ in \eqref{equ:ripnsp} becomes of order $\sqrt{M/s}$ with high probability  which is only worse than \eqref{equ:stable} by a $\log N$ factor for the optimal dimension $M=O(s\log N)$. Moreover, in some settings $N/s$ is a fixed constant, 
see for example \cite{BCTT11}.


 Studying sensing matrices with the RIP-NSP property have potential applications including topics related to preconditioning of sensing matrices. For example,  as the authors of \cite{KW13} point out to us, in compressive imaging (wavelet sparse images,  Fourier measurements), the variable-density sampling matrix has the null space of an RIP matrix but does not have RIP. 


People have always wondered whether the RIP condition is too strong for compressed sensing. In fact, one motivation for proposing  RNSP and RWP is to have a weaker-than-RIP condition, and yet provides nice theoretical performance. Here we want to raise an even more basic question: Is RIP strictly stronger than NSP? With a little argument, we can see this question is already solved with a positive answer. The set of matrices $\{U\Phi: U \text{ is invertible}\}$ share the same null space, but can have dramatically different RIP constants. For example, suppose $\Phi$ has a small $\delta_2$, which means every two columns of $\Phi$ are nearly orthogonal to each other. One can easily find an invertible matrix $U$ that sends the first two columns of $\Phi$ to be nearly parallel, which will result in a big constant $\delta_2$ for $U\Phi$. 

In order to do a fair comparison between RIP and NSP, we need to mod out this left multiplication, so the right question to ask is: Given an NSP matrix $\Phi$, is there an RIP matrix that belongs to the set $\{U\Phi: U \text{ is invertible}\}$? This is exactly asking whether RIP-NSP is strictly stronger than NSP, which gives a second purpose to this new condition.

The second contribution of this paper is to show that RIP-NSP is strictly stronger than NSP.

\begin{theorem}\label{thm:no}
Given arbitrary $\gamma<1$ and $\delta_{2s}<\frac{1}{\sqrt{2}}$,
if $M,N,s\in\Z$ satisfy that  $N/2\geq M\geq 2C(\gamma/3)s\log^4N, N/s\geq11$, and $s\geq \frac{640(1+\delta_{2s})^2}{(1-\sqrt{2}\delta_{2s})^2\gamma^2}$, then there exists a sensing matrix $\Phi\in\R^{M\times N}$ that satisfies $(s,\gamma)$-NSP, but does not satisfy RIP-NSP with $\delta_{2s}$.
\end{theorem}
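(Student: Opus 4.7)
The plan is to argue by the contrapositive of Theorem~\ref{thm:sta}: if $\Phi$ were RIP-NSP with $\delta_{2s}$, then BP on $\Phi$ would satisfy the stability bound~\eqref{equ:ripnsp}, so to prove the theorem it is enough to exhibit a matrix $\Phi\in\R^{M\times N}$ with $(s,\gamma)$-NSP together with a pair $(x,e)$, $\|e\|_2\le\epsilon$, for which the BP minimizer $\hat x$ violates~\eqref{equ:ripnsp}. Heuristically, NSP is a qualitative ``spreading'' property of the null space only, whereas RIP-NSP forces a quantitative noise-stability, and the gap between the two is what the construction must exploit by placing an ``approximate-null'' direction inside the $\ell_1$-cone.

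I would take $\Phi$ to be a two-block matrix
\[
\Phi=\begin{pmatrix} A\\ \varepsilon\mathbf{1}^T\end{pmatrix},
\]
where $A\in\R^{(M-1)\times N}$ is a suitably normalized subgaussian matrix with its rows centered so that $A\mathbf{1}=0$. The hypothesis $M\ge 2C(\gamma/3)s\log^4 N$ provides enough slack to guarantee that, with high probability, the (centered) $A$ satisfies $(s,\gamma/3)$-NSP and RIP; since $\mathrm{null}(\Phi)=\mathrm{null}(A)\cap\mathbf{1}^\perp\subset\mathrm{null}(A)$, $\Phi$ inherits $(s,\gamma/3)$-NSP and in particular $(s,\gamma)$-NSP. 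The scalar $\varepsilon>0$ is chosen small enough that $\|\Phi\mathbf{1}\|_2=\varepsilon N$ is tiny compared to $\|\mathbf{1}\|_2=\sqrt N$, making $\mathbf{1}$ an approximate-null direction and forcing $\lambda(\Phi)=\varepsilon\sqrt N$ to be small. The hypothesis $N/s\ge 11$ (optionally after replacing $\mathbf{1}$ by a spread vector supported on a larger subset if $\gamma$ is very small) ensures that this spread direction is $(s,\gamma)$-NSP-compatible and that enough room exists in $\R^N$ for the cone-flipping argument below.

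To produce the violating pair, pick $h_0\in\mathrm{null}(\Phi)$ for which the NSP inequality is (close to) tight on some $s$-support $T$, and set
\[
h=h_0+\alpha\mathbf{1},
\]
with $\alpha>0$ the smallest scalar placing $h$ in the BP cone $\|h_{T^c}\|_1\le\|h_T\|_1$. A direct cone-flipping computation forces $\alpha\gtrsim(1-\gamma)\|(h_0)_{T^c}\|_1/N$, while $\|\Phi h\|_2=\alpha\|\Phi\mathbf{1}\|_2=\alpha\varepsilon N$. Take $x=c\,h_T$ with $c>1$ large enough that $x-h$ is sign-aligned with $x$ on $T$, set $e=-\Phi h/2$ and $\epsilon=\|\Phi h\|_2/2$. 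Then both $x$ and $x-h$ are BP-feasible, and the cone condition gives $\|x-h\|_1\le\|x\|_1$, so the BP minimizer $\hat x$ satisfies $\|\hat x\|_1\le\|x\|_1$; a short argument using the NSP on the difference $\hat x-x$ and the strictness of the cone inequality (after possibly scaling $\alpha$ slightly beyond the minimum) yields $\|\hat x-x\|_2\ge c_0\|h\|_2$ for a universal constant $c_0$.

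Since $\sigma_s(x)=0$, the RHS of~\eqref{equ:ripnsp} reduces to $D_1\sqrt{1+\delta_{2s}}\sqrt{N/s}\,\epsilon/\lambda(\Phi)$; substituting $\lambda(\Phi)=\varepsilon\sqrt N$ and $\epsilon=\alpha\varepsilon N/2$ causes the $\varepsilon$'s to cancel, and using $\alpha\asymp(1-\gamma)/N\cdot\|(h_0)_{T^c}\|_1$ together with the Cai--Zhang value $D_1\asymp 1/(1-\sqrt 2\delta_{2s})$ from Theorem~\ref{thm:cai}, the violation reduces to a lower bound on $s$ of the shape $s\gtrsim(1+\delta_{2s})^2/\bigl((1-\sqrt 2\delta_{2s})^2\gamma^2\bigr)$, matching the theorem's hypothesis up to the explicit constant $640$. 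The main technical obstacle is threefold: keeping the centering of $A$ compatible with its high-probability NSP (this is where the factor $2$ in $M\ge 2C(\gamma/3)s\log^4N$ is used), quantifying $\lambda(\Phi)$ precisely enough that the $\varepsilon$'s really do cancel in the final inequality, and verifying that the BP minimizer cannot ``escape'' to a point close to $x$ — this last requires a careful NSP-cone optimality argument inside the construction, whose constants eventually produce the explicit $640$ and $11$ in the hypotheses.
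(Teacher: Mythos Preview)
Your high-level strategy---build a matrix with $(s,\gamma)$-NSP and then exhibit an $s$-sparse $x$ together with a feasible point $\hat x$ of smaller $\ell_1$-norm that violates the RIP-NSP stability bound of Theorem~\ref{thm:sta}/Corollary~\ref{lm:cm}---is exactly the one the paper uses. But the specific construction you propose does not close quantitatively, and the gap is not cosmetic.

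With $\Phi=\begin{pmatrix}A\\ \varepsilon\mathbf 1^T\end{pmatrix}$ and $h=h_0+\alpha\mathbf 1$, $h_0\in\ker\Phi$, the perturbation $h$ is forced to be \emph{spread}: $h_0$ lies in the null space of an NSP matrix (hence is $\ell_1$-spread), and $\mathbf 1$ is maximally spread. After your own substitutions the right-hand side of~\eqref{equ:ripnsp} becomes of order $D_1\,\alpha N/\sqrt s\asymp D_1\,\|(h_0)_{T^c}\|_1/\sqrt s$, whereas for a spread $h$ one only has $\|h\|_2\asymp\|h_0\|_1/\sqrt N$. The violation $\|h\|_2>\text{RHS}$ then demands $\sqrt s\gtrsim D_1\sqrt N$, contradicting $N/s\ge 11$. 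The asserted reduction to ``$s\gtrsim(1+\delta_{2s})^2/\bigl((1-\sqrt2\delta_{2s})^2\gamma^2\bigr)$'' does not follow from your construction; no choice of $h_0\in\ker\Phi$ and no tuning of $\varepsilon$ repairs this scaling. (Your side-step of arguing about the true BP minimizer is also unnecessary: as in the paper, it suffices to apply Corollary~\ref{lm:cm} directly to the specific pair $(x,\hat x)$.)

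The paper's construction is engineered precisely to avoid this obstacle. It does \emph{not} use $\mathbf 1$ as the approximate-null direction. Instead it takes $A\in\R^{(M-s)\times(N-s)}$ with $(s,\gamma/3)$-NSP and $\mathbf 1\in\ker A$ (via partial Fourier, Corollary~\ref{lm:A}), then builds $\ker\Phi=\Nc_e'\oplus\mathrm{span}(d)$ where
\[
d=\Bigl(\tfrac{(N-4s)\gamma}{4},\tfrac{(N-4s)\gamma}{8},\ldots,\tfrac{(N-4s)\gamma}{2^{s+1}},\,\underbrace{-1,\ldots,-1}_{N-s}\Bigr),
\]
and chooses $\Phi$ with orthonormal rows so that a specific row $\varphi_1$ has the form $\rho^{-1}(\alpha,\ldots,\alpha,\gamma,\ldots,\gamma)$. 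The crucial point is that $\rho\varphi_1+\gamma d$ \emph{vanishes on the last $N-s$ coordinates}: the error vector $h=\rho\varphi_1+\gamma d$ is $s$-sparse by design, giving $\|h\|_2\asymp N\gamma^2$ while the noise is only $\epsilon=\rho\asymp\gamma\sqrt N$ and $\lambda(\Phi)=1$. This concentration of $h$ is what produces the correct inequality $s>D_1^2/\gamma^2$ rather than the impossible $s>D_1^2 N$. Your proposal is missing exactly this idea: an approximate-null direction that, when combined with a genuine null vector, yields an $s$-sparse error.
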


This means that the null space property is fundamentally weaker than the restricted isometry property as it is possible to find a subspace $V$ that satisfies the NSP, but no matrix with kernel $V$ has RIP. Moreover, this theorem is surprisingly strong in the sense that no matter how small $\gamma$ is, NSP cannot be as good as RIP-NSP.

Although RIP-NSP achieves (A)-(B) (when $N/s$ is fixed), the RIP-NSP property has the drawback that it is not in the form of a restriction directly on the vectors in the null space.  
For future work, we wish to find an equivalent condition of RIP-NSP that is in a more direct form. This is also a question of whether one can construct a sensing matrix with restricted isometry property given that its null space is fixed and nicely structured. Theorem \ref{thm:no} suggests that we need a property stronger than NSP. 
For the rest of the paper, Section \ref{sec:ripnsp} provides an overview of RIP, NSP and other related properties, and Section \ref{sec:proof} provides proofs of Theorem \ref{thm:sta} and Theorem \ref{thm:no}.

\section{RIP, NSP, and more}\label{sec:ripnsp}

We will first review and define some notations for ease of the reading.
For a vector $v\in\R^N$ and $p>0$, let $\|v\|_p=\left(\sum_{i=1}^N|v_i|^p\right)^{1/p}$. If $I$ is an index set, $I^c$ is the complement set of $I$, and $v_I\in\R^N$ is the vector that has the same entry as $v$ on $I$, but 0 everywhere else. However, for convenience, we also use $v_I$ to indicate the truncation of $v$ on $I$, in which case $v_I\in \R^{|I|}$. It will be clear which interpretation is used from context.

For a matrix $M$, $\lambda(M)$ is the smallest positive singular value of $M$, and $\|M\|_{op}$ is the operator norm of $M$. 

\begin{definition}[$s$-NSP]\label{def:nsp}
A matrix $A$ has the null space property of order $s$ if 
\begin{equation}\label{equ:snsp}
\forall\ v\in\ker \Phi\backslash\{0\}, \forall\ |T|\leq s, \|v_T\|_1< \|v_{T^c}\|_1.
\end{equation}
\end{definition}

As summarized above, the null space property is known to be the necessary and sufficient condition of successful reconstruction of any $s$ sparse signals via the Basis pursuit when no noise is present~\cite{GN03}.

With an argument using compactness of $\ker(\Phi)\cap \Sb^{d-1}$, where $\Sb^{d-1}$ is the unit sphere in $\R^d$, one can prove that \eqref{equ:snsp} is equivalent to the existence of $0<\gamma<1$ such that
\begin{equation}\label{equ:gamma}
\forall\ v\in\ker \Phi, \forall\ |T|\leq s, \|v_T\|_1\leq\gamma \|v_{T^c}\|_1.
\end{equation}
In fact, the null space property was also proposed in this format~\cite{CDD09}.
The smallest $\gamma$ that makes \eqref{equ:gamma} holds is called the {\em null space constant}. We use $(s,\gamma)$-NSP to denote the null space property of order $s$ with null space constant $\gamma$. We will also abuse the definition by saying a subspace $V$ has $(s,\gamma)$-NSP if $\|v_T\|_1\leq\gamma \|v_{T^c}\|_1$ for every $v\in V$ and every $|T|\leq s$.

The null space property proposed has a recovery guarantee with noisy measurements, but it does not have a bound like in \eqref{equ:stable}.

\begin{theorem}\cite[Theorem III.4.1]{chen}
If $\Phi$ has $(s,\gamma)$-NSP, then any minimizer $\hat x$ of \eqref{equ:le} satisfies
\begin{equation}\label{equ:nsp}
\|\hat x-x\|_1\leq \frac{4\sqrt{2}\sqrt N }{(1-\gamma)\lambda(\Phi)}\epsilon+\frac{4(1+\gamma)}{\sqrt{2}(1-\gamma)}\sigma_s(x).
\end{equation}
\end{theorem}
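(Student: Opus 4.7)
The plan is to set $h=\hat x-x$ and show $\|h\|_1$ satisfies the claimed bound. The obstacle is that $h$ need not lie in $\ker\Phi$ when $\epsilon>0$, so NSP cannot be applied to $h$ directly; I handle this by splitting $h$ into its null-space piece and its orthogonal-complement piece and treating them separately.

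\textbf{Step 1: Cone condition from optimality.} Let $T$ be an index set of size $s$ achieving $\|x_{T^c}\|_1=\sigma_s(x)$. Since $\|\Phi x-y\|_2=\|e\|_2\leq\epsilon$, $x$ is feasible for \eqref{equ:le}, so $\|\hat x\|_1\leq\|x\|_1$. Writing $\hat x=x+h$ and applying the triangle/reverse triangle inequality on $T$ and $T^c$ separately gives, after the standard manipulation,
\[
\|h_{T^c}\|_1\leq\|h_T\|_1+2\sigma_s(x).
\]

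\textbf{Step 2: Orthogonal decomposition and noise control.} Write $h=h_1+h_2$ with $h_1\in\ker\Phi$ and $h_2\in(\ker\Phi)^\perp$. Then $\Phi h_2=\Phi h$, and
\[
\|\Phi h\|_2\leq\|\Phi\hat x-y\|_2+\|y-\Phi x\|_2\leq 2\epsilon.
\]
Since $h_2\perp\ker\Phi$, we get $\|h_2\|_2\leq\|\Phi h_2\|_2/\lambda(\Phi)\leq 2\epsilon/\lambda(\Phi)$, and therefore $\|h_2\|_1\leq\sqrt{N}\|h_2\|_2\leq 2\sqrt{N}\epsilon/\lambda(\Phi)$. (The $\sqrt{2}$ appearing in the stated bound comes from a slightly sharper combination such as the parallelogram-type inequality $\|\Phi h\|_2^2\leq 2(\|\Phi\hat x-y\|_2^2+\|y-\Phi x\|_2^2)\leq 4\epsilon^2$ distributed over the subsequent steps; it does not affect the shape of the argument.)

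\textbf{Step 3: Apply NSP to $h_1$.} Because $h_1\in\ker\Phi$, $(s,\gamma)$-NSP yields $\|h_{1,T}\|_1\leq\gamma\|h_{1,T^c}\|_1$, hence
\[
\|h_1\|_1=\|h_{1,T}\|_1+\|h_{1,T^c}\|_1\leq\frac{1+\gamma}{1-\gamma}\bigl(\|h_{1,T^c}\|_1-\|h_{1,T}\|_1\bigr).
\]
Using $h_1=h-h_2$ and the triangle inequality on the difference, together with Step 1,
\[
\|h_{1,T^c}\|_1-\|h_{1,T}\|_1\leq\bigl(\|h_{T^c}\|_1-\|h_T\|_1\bigr)+\|h_2\|_1\leq 2\sigma_s(x)+\|h_2\|_1.
\]

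\textbf{Step 4: Assemble.} Combining the two previous displays and using $\|h\|_1\leq\|h_1\|_1+\|h_2\|_1$ gives
\[
\|h\|_1\leq\frac{1+\gamma}{1-\gamma}\bigl(2\sigma_s(x)+\|h_2\|_1\bigr)+\|h_2\|_1=\frac{2(1+\gamma)}{1-\gamma}\sigma_s(x)+\frac{2}{1-\gamma}\|h_2\|_1.
\]
Substituting the bound on $\|h_2\|_1$ from Step 2 yields the desired inequality, up to the factors of $\sqrt{2}$ which are obtained by being more careful at the noise-control step. The main difficulty, as noted, is handling the noise in the only conceptually nontrivial way available: decomposing $h$ so that NSP can be invoked on the part where it applies, while the orthogonal part is controlled by the smallest positive singular value of $\Phi$.
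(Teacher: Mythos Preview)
The paper does not prove this theorem; it is simply quoted from \cite{chen} (Chen's thesis), so there is no in-paper proof to compare your argument against.

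Your proof is correct, and in fact the constants you obtain are \emph{better} than those in the stated inequality: your Step~4 gives
\[
\|h\|_1 \leq \frac{4\sqrt{N}}{(1-\gamma)\lambda(\Phi)}\,\epsilon + \frac{2(1+\gamma)}{1-\gamma}\,\sigma_s(x),
\]
and since $4<4\sqrt{2}$ and $2<4/\sqrt{2}=2\sqrt{2}$, this already implies the theorem as stated. Your parenthetical remark in Step~2 about needing a ``sharper combination'' to recover the $\sqrt{2}$ factors has the direction reversed: the stated constants are looser than yours, not tighter, so no additional care is required. (Incidentally, the parallelogram-type bound you mention, $\|\Phi h\|_2^2\le 2(\|\Phi\hat x-y\|_2^2+\|y-\Phi x\|_2^2)\le 4\epsilon^2$, yields the same $\|\Phi h\|_2\le 2\epsilon$ as the triangle inequality, so it would not change anything anyway.) Apart from this harmless confusion about constants, the decomposition $h=h_1+h_2$ with $h_1\in\ker\Phi$ and $h_2\in(\ker\Phi)^\perp$, followed by NSP on $h_1$ and the singular-value bound on $h_2$, is exactly the natural approach and is carried out cleanly.
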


More results related to the stability of $\ell_q$ version of null space property can be found in \cite{S12}. The recent work by Chen et al~\cite{CWW14} generalizes the null space property to the dictionary case where a similar stability bound is also proved.

The fundamental work on compressed sensing by Candes \cite{RIP08} admits a \eqref{equ:stable} type stability result if $\Phi$ satisfies the restricted isometry property. But we will list an optimal and more updated result by Cai and Zhang \cite{CZ14} here.

\begin{definition}[RIP]\label{def:RIP}
A matrix $\Phi$ has the restricted isometry property of sparsity $k$ if there exists $0<\delta<1$ such that
\begin{equation*}
(1-\delta)\|v\|_2^2\leq\|\Phi v\|_2^2\leq (1+\delta)\|v\|_2^2,
\end{equation*}
for all $v\in\Sigma_k$.

Moreover, the smallest $\delta$ that satisfies the above inequality is called the restricted isometry constant, and denoted as $\delta_k(\Phi)$ or $\delta_k$ when not ambiguous.
\end{definition}
\begin{theorem}[Cai and Zhang~\cite{CZ14}]\label{thm:cai}
If $\delta_{2s}(\Phi)<\frac{1}{\sqrt{2}}$, then the minimizer $\hat x$ of \eqref{equ:le} satisfies
\begin{equation}\label{equ:ripbound}
\|\hat x-x\|_2\leq D_1\epsilon+D_2\frac{\sigma_s(x)}{\sqrt{s}},
\end{equation}
where $D_1=\frac{2\sqrt{2(1+\delta_{2s})}}{1-\sqrt{2}\delta_{2s}}$, and $D_2=2\left(\frac{\delta_{2s}+\sqrt{\delta_{2s}(1/\sqrt{2}-\delta_{2s})}}{\sqrt{2}(1/\sqrt{2}-\delta_{2s})}+1\right)$, see \cite[Theorem 2.1]{CZ14}.
\end{theorem}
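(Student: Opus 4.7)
My plan is to run the ``tube, cone, and sparse-polytope'' strategy of Cai and Zhang; the twist over earlier RIP analyses is a convex-combination decomposition of the tail of $h := \hat x - x$ in place of the dyadic shelling, which caps the threshold at $\delta_{2s} < \sqrt 2 - 1$.

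First I would set up the two baseline bounds. Since $x$ is feasible for \eqref{equ:le} and $\hat x$ is optimal, $\|\hat x\|_1 \le \|x\|_1$, and the triangle inequality gives the tube bound $\|\Phi h\|_2 \le 2\epsilon$. Letting $T$ index the $s$ largest entries of $x$ in magnitude, the standard expansion of $\|x\|_1 \ge \|x+h\|_1$ on and off $T$ yields the cone bound
\begin{equation*}
\|h_{T^c}\|_1 \;\le\; \|h_T\|_1 + 2\sigma_s(x).
\end{equation*}

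The next step is the sparse-representation decomposition. After possibly enlarging $T$ so that $\|h_{T^c}\|_\infty \le \|h_{T^c}\|_1/s$, the Cai--Zhang sparse-representation-of-a-polytope lemma writes
\begin{equation*}
h_{T^c} \;=\; \sum_i \lambda_i u_i, \qquad \sum_i \lambda_i = 1,\ \lambda_i\ge 0,
\end{equation*}
with each $u_i$ that is $s$-sparse, supported in $T^c$, and obeys $\|u_i\|_2 \le \|h_{T^c}\|_1/\sqrt s$. Since $h_T$ is itself $s$-sparse, every $h_T \pm u_i$ is $2s$-sparse and RIP applies.

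The technical heart is a polarization-and-averaging step. Expanding
\begin{equation*}
\langle \Phi(h_T + u_i), \Phi(h_T - u_j) \rangle \;=\; \|\Phi h_T\|_2^2 + \langle \Phi h_T, \Phi(u_i - u_j)\rangle - \langle \Phi u_i, \Phi u_j\rangle,
\end{equation*}
averaging over $i,j$ with weights $\lambda_i\lambda_j$, and using $\sum_j \lambda_j u_j = h_{T^c}$ so that $\Phi h_T + \sum_j \lambda_j \Phi u_j = \Phi h$, the cross terms reorganize into a $\Phi h$ contribution (bounded by the tube) and a $\delta_{2s}\,\|u_i\|_2\|u_j\|_2$ contribution from the disjoint-support RIP inner-product inequality. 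Combined with the RIP lower bound on $\|\Phi(h_T \pm u_i)\|_2$, this collapses to a sharp inequality of the form
\begin{equation*}
(1 - \sqrt 2\,\delta_{2s})\,\|h_T\|_2 \;\le\; \sqrt{2(1+\delta_{2s})}\,\|\Phi h\|_2 + C\,\frac{\sigma_s(x)}{\sqrt s},
\end{equation*}
whose solvability for $\|h_T\|_2$ is precisely what forces $\delta_{2s} < 1/\sqrt 2$. To finish, the standard shelling of $h_{T^c}$ into decreasing blocks of size $s$ yields $\|h_{T^c}\|_2 \le \|h_{T^c}\|_1/\sqrt s$; combining this with the cone bound, the tube $\|\Phi h\|_2 \le 2\epsilon$, and the just-established estimate on $\|h_T\|_2$ produces \eqref{equ:ripbound} with the explicit constants $D_1, D_2$.

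The main obstacle is the bookkeeping in the averaged polarization step: one needs the cross terms $\sum_{i,j}\lambda_i\lambda_j\langle \Phi u_i, \Phi u_j\rangle$ to line up so that $\sqrt 2$ appears exactly once on the left-hand side, which is the entire reason the Cai--Zhang convex-combination viewpoint gives the sharp $1/\sqrt 2$ threshold rather than the classical $\sqrt 2 - 1$.
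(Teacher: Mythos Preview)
The paper does not prove this theorem at all: Theorem~\ref{thm:cai} is quoted verbatim from Cai and Zhang \cite{CZ14} (with an explicit pointer to \cite[Theorem~2.1]{CZ14}) and is used as a black box in the proofs of Proposition~\ref{pro:main}, Corollary~\ref{lm:cm}, and Theorem~\ref{thm:no}. There is therefore no ``paper's own proof'' to compare your proposal against.

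That said, what you have written is a faithful high-level sketch of the actual Cai--Zhang argument: the cone and tube constraints, the sparse-representation-of-a-polytope lemma replacing the classical dyadic shelling of $h_{T^c}$, and the averaged polarization identity that produces the factor $\sqrt 2\,\delta_{2s}$ on the left rather than $(1+\sqrt 2)\delta_{2s}$. If your goal is to make the present paper self-contained, your outline is the right one to flesh out; just be aware that for the purposes of \emph{this} paper no such proof is expected, and the authors only need the statement together with the observation (made just before Corollary~\ref{lm:cm}) that the Cai--Zhang proof uses of \eqref{equ:le} only the two consequences (b) and (c).
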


Moreover, this RIP bound is optimal, i.e., for every $\epsilon>0$, there exists $\Phi$ with $\delta_{2s}(\Phi)<\frac{1}{\sqrt{2}}+\epsilon$, and a signal $x$ such that \eqref{equ:ripbound} does not hold \cite{DG09}.

The robust null space property is a stronger version of the null space property, but it is not a property that only depends on the kernel. A matrix $\Phi$ satisfies the RNSP of order $s$ with constant $\rho, \tau>0$ if
$$\|v_T\|_2\leq\frac{\rho}{\sqrt{s}}\|v_{T^c}\|_1+\tau \|\Phi x\|_2, \text{ for all $v$ and all } |T|\leq s.$$ 
It is shown in \cite{F14} that RNSP is an equivalent condition to have a stability result like \eqref{equ:ripbound}, with the two constants only depend on $\rho, \tau$. Moreover, the author shows that RNSP is strictly weaker than RIP using the Weibull matrices.  The work \cite{DLR15,LM14} also describe a large class of random matrices which satisfy an RNSP but not an RIP condition. The sparse approximation property in \cite{S11} is similar to RNSP. 

In a recent paper by Cahill and Mixon~\cite{CM14}, the authors propose a robust width property, which requires 
$$\|v\|_2\leq\frac{c_0}{\sqrt{s}}\|v\|_1, \quad\forall x \text{ such that } \|\Phi x\|_2\leq c_1\|x\|_2,$$ for some $c_0, c_1>0$. The novelty of RWP is that it is a necessary and sufficient condition not only for the stability of Basis pursuit, but also for nuclear norm minimization and possibly other optimization problems.


\section{Proofs}\label{sec:proof}


\subsection{Stability result with RIP-NSP}
To prove Theorem \ref{thm:sta}, we show something more general, which requires a Lemma.

\begin{lemma}\label{lem:op}
If a matrix $\Phi$ has RIP with $\delta_k$, then
\begin{equation}\label{equ:boundphi}
\|\Phi\|_{op}\leq \sqrt{N/k+1}\sqrt{1+\delta_k}
\end{equation}
\end{lemma}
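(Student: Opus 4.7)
The plan is to decompose an arbitrary vector $v \in \R^N$ into a sum of $k$-sparse pieces whose supports partition $\{1,\dots,N\}$, apply RIP to each piece, and then sum everything up via the triangle inequality and Cauchy--Schwarz.

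Concretely, I would first partition the index set $\{1,\dots,N\}$ into disjoint blocks $T_1,\dots,T_L$, each of size at most $k$, so that $L = \lceil N/k \rceil$. Given any $v \in \R^N$ with $\|v\|_2 = 1$, write $v = \sum_{i=1}^L v_{T_i}$ so that $\Phi v = \sum_{i=1}^L \Phi v_{T_i}$. Because each $v_{T_i}$ is $k$-sparse, the RIP hypothesis yields $\|\Phi v_{T_i}\|_2 \leq \sqrt{1+\delta_k}\,\|v_{T_i}\|_2$.

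Next I would apply the triangle inequality followed by Cauchy--Schwarz:
\[
\|\Phi v\|_2 \;\leq\; \sum_{i=1}^L \|\Phi v_{T_i}\|_2 \;\leq\; \sqrt{1+\delta_k}\sum_{i=1}^L \|v_{T_i}\|_2 \;\leq\; \sqrt{1+\delta_k}\,\sqrt{L}\,\Bigl(\sum_{i=1}^L \|v_{T_i}\|_2^2\Bigr)^{1/2}.
\]
Since the blocks $T_i$ partition $\{1,\dots,N\}$, the inner sum equals $\|v\|_2^2 = 1$, so $\|\Phi v\|_2 \leq \sqrt{L}\sqrt{1+\delta_k}$. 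Finally, I would bound $L = \lceil N/k\rceil \leq N/k + 1$ (which is true whether or not $k$ divides $N$) to conclude \eqref{equ:boundphi}.

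There is no real obstacle here: the argument is a standard sparse-partition plus Cauchy--Schwarz manipulation, and the only mildly delicate point is remembering that $\lceil N/k\rceil \leq N/k+1$ so that the advertised constant $\sqrt{N/k+1}$ (rather than $\sqrt{\lceil N/k\rceil}$) appears.
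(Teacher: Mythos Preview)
Your argument is correct and matches the paper's proof essentially line for line: partition into $\lceil N/k\rceil$ disjoint blocks of size at most $k$, apply RIP on each block, then combine via the triangle inequality and Cauchy--Schwarz, and finish with $\lceil N/k\rceil \le N/k+1$.
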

\begin{proof}
Partition the index set $\{1,...,N\}$ into $\{1,...,N\}=\cup_{i=1}^m T_i$, where each $T_i$ is of size $k$ except for the last one $T_m$, which has cardinality no bigger than $k$. $T_i$'s are nonoverlaping, so $m=\lceil N/k\rceil$, which is the smallest integer greater than or equal to $N/k$.
\begin{align*}
\|\Phi\|_{op}&=\max_{\|x\|_2=1}\|\Phi x\|_2=\max_{\|x\|_2=1}\left\|\sum_{i=1}^m \Phi_{T_i}x_{T_i}\right\|_2\leq\max_{\|x\|_2=1}\sum_{i=1}^m\sqrt{1+\delta_k}\|x_{T_i}\|_2\\
&\leq \sqrt{1+\delta_k}\max_{\|x\|_2=1}\sqrt{m\sum_{i=1}^m \|x_{T_i}\|_2^2}=\sqrt{m}\sqrt{1+\delta_k}\leq \sqrt{N/k+1}\sqrt{1+\delta_k}.
\end{align*}
\end{proof}
\begin{proposition}\label{pro:main} Given $x,\hat x\in\R^N$ and a list of statements

(a) The sensing matrix $A$ satisfies RIP with $\delta_{2s}$;

(b) $\|\hat x\|_1\leq\|x\|_1$;

(c) $\|A x-A\hat x\|_2\leq 2\rho$;

(d) $\|\hat x-x\|_*\leq C_1\rho+C_2\sigma_s(x)$, where $\|\cdot\|_*$ is some norm (or quasinorm);

(e) The sensing matrix $\Phi$ has RIP-NSP with $\delta_{2s}$;

(f) $\|\Phi x-\Phi\hat x\|_2\leq 2\epsilon$;

(g) $\|\hat x-x\|_*\leq C_1\frac{\sqrt{1+\delta_{2s}}}{\lambda(\Phi)}\sqrt{\frac{N}{s}}\epsilon+C_2\sigma_s(x)$;

If (a)+(b)+(c) $\Rightarrow$ (d), then (e)+(b)+(f) $\Rightarrow$ (g).
\end{proposition}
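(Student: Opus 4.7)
The plan is to reduce (g) directly to the posited implication (a)+(b)+(c) $\Rightarrow$ (d) by producing a companion matrix $A$ to which that implication applies. The RIP-NSP hypothesis (e) furnishes a matrix $A$ with $\ker A = \ker \Phi$ such that $A$ has RIP with $\delta_{2s}$; thus (a) is immediate, (b) is one of our assumptions, and the only real task is to produce a $\rho$ for which (c) holds for $A$. Once this $\rho$ is identified, applying the given implication in the norm $\|\cdot\|_*$ will return (g).

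The main step is to express $\|A(x-\hat x)\|_2$ in terms of $\|\Phi(x-\hat x)\|_2$. The natural tool is the orthogonal decomposition $x - \hat x = u + v$ with $u \in \ker \Phi = \ker A$ and $v \in (\ker \Phi)^\perp$, so that $A(x-\hat x) = Av$ and $\Phi(x-\hat x) = \Phi v$. Since $v$ lies in the row space of $\Phi$, the definition of the smallest positive singular value gives $\|v\|_2 \leq \|\Phi v\|_2/\lambda(\Phi)$. Combining this with $\|Av\|_2 \leq \|A\|_{op}\|v\|_2$ and invoking (f) yields
\[
\|A(x - \hat x)\|_2 \;\leq\; \frac{\|A\|_{op}}{\lambda(\Phi)}\,\|\Phi(x - \hat x)\|_2 \;\leq\; \frac{2\|A\|_{op}\,\epsilon}{\lambda(\Phi)}.
\]
Finally, Lemma \ref{lem:op} applied to $A$ with sparsity parameter $2s$ gives $\|A\|_{op} \leq \sqrt{N/(2s)+1}\sqrt{1+\delta_{2s}} \leq \sqrt{N/s}\sqrt{1+\delta_{2s}}$, where the second inequality uses $N \geq 2s$ (the only regime of interest for RIP of order $2s$). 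Setting $\rho := \frac{\sqrt{1+\delta_{2s}}}{\lambda(\Phi)}\sqrt{N/s}\,\epsilon$, statements (a), (b), (c) all hold for $A$ with this $\rho$, and the assumed implication delivers (d), which is precisely (g).

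The only nontrivial obstacle is the orthogonal decomposition step: without first projecting $x-\hat x$ onto $(\ker \Phi)^\perp$, the relevant "smallest singular value" of $\Phi$ would be zero, making any such bound vacuous. Everything else is bookkeeping plus an application of Lemma \ref{lem:op}. The somewhat awkward factor $\sqrt{N/s}/\lambda(\Phi)$ in (g) is the unavoidable cost of comparing two matrices sharing a null space but related by an arbitrary invertible change of coordinates on the row space; it is precisely what prevents Theorem \ref{thm:sta} from matching the sharper bound \eqref{equ:stable}, as already acknowledged in the discussion after Theorem \ref{thm:sta}.
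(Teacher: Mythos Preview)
Your proof is correct and follows the same overall strategy as the paper: verify (c) for the RIP matrix $A$ by showing $\|A(x-\hat x)\|_2 \leq (\|A\|_{op}/\lambda(\Phi))\,\|\Phi(x-\hat x)\|_2$, then invoke Lemma~\ref{lem:op} to bound $\|A\|_{op}$. The only difference is in how that inequality is obtained. The paper writes $\Phi = UA$ for an invertible $U$, bounds $\|A(x-\hat x)\|_2 = \|U^{-1}\Phi(x-\hat x)\|_2 \leq \|\Phi(x-\hat x)\|_2/\lambda(U)$, and then shows $\lambda(U) \geq \lambda(\Phi)/\|A\|_{op}$ via a short singular-value computation. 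You instead project $x-\hat x$ orthogonally onto $(\ker\Phi)^\perp$ and use $\|v\|_2 \leq \|\Phi v\|_2/\lambda(\Phi)$ directly. Your route is marginally cleaner: it never introduces the auxiliary matrix $U$, and in particular it sidesteps any worry about whether $\Phi$ and $A$ have the same number of rows (the paper's ``invertible $U$'' tacitly assumes compatible dimensions). Both arguments arrive at the identical factor $\|A\|_{op}/\lambda(\Phi)$ and finish the same way.
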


Before we present the proof and illustrate how to apply this proposition to get Theorem \ref{thm:sta}, we wish to provide some intuitions of this seemingly odd proposition. One should view $x$ and $\hat x$ as the original and recovered signals respectively. Then conditions (b) and (c) are consequences of 
\begin{equation}\label{equ:lrho}
\hat x=\arg\min\|\bar x\|_1\quad\quad \text{subject to }\quad\quad \|A\bar x -y\|_2\leq\rho.
\end{equation} But practically we shall view (b)+(c) being equivalent to \eqref{equ:lrho} because in the literature of stability analysis with $\ell_1$ minimization, as far as we know, the proof only utilizes conditions (b) and (c)  when model \eqref{equ:lrho} is used. For example, see proofs of \cite[Theorem 1.2]{RIP08},  \cite[Theorem 2.1]{CZ14}, etc. 

With this in mind, (a)+(b)+(c) $\Rightarrow$ (d) is the classical statement saying that RIP guarantees a good stability result with $\ell_1$ minimization (Theorem \ref{thm:cai}), except here we allow other norm estimations in (d). This proposition is saying that if certain RIP conditions guarantees a stability result (with $\ell_1$ minimization), then the RIP-NSP condition will also guarantee a similar stability result. Moreover, the RIP-NSP constant matches the RIP constant.

As the main application of Proposition \ref{pro:main}, we see that Theorem \ref{thm:cai} states that (a)+(b)+(c) $\Rightarrow$ (d)  (again in the proof the only information extracted from \eqref{equ:lrho} is (b) and (c)) with $\rho=\epsilon$, $A=\Phi$, $\|\cdot\|_*=\|\cdot\|_2$, and $\delta_{2s}<\frac{1}{\sqrt{2}}$,  and therefore we have the following corollary.

\begin{corollary}\label{lm:cm}
Suppose $\Phi$ is $M\times N$ and has RIP-NSP with $\delta_{2s}<\frac{1}{\sqrt 2}$. If $\hat x, x$ satisfies $\|\hat x\|_1\leq\|x\|_1$ and $\|\Phi \hat x-\Phi x\|_2\leq 2\epsilon$, then
\begin{equation}\label{eq:RIPclass}\|\hat x-x\|_2\leq D_1\frac{\sqrt{1+\delta_{2s}}}{\lambda(\Phi)}\sqrt{\frac{N}{s}}\epsilon+D_2\frac{\sigma_s(x)}{\sqrt{s}}
\end{equation}
\end{corollary}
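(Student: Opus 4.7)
The plan is to derive the corollary as a direct instantiation of Proposition \ref{pro:main}, using the Cai--Zhang stability theorem to supply the proposition's hypothesis $(a)+(b)+(c) \Rightarrow (d)$.

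First I would match the abstract framework of Proposition \ref{pro:main} to the present setting by taking $\|\cdot\|_* = \|\cdot\|_2$, $\rho = \epsilon$, $C_1 = D_1$, and $C_2 = D_2/\sqrt{s}$. Under this dictionary the implication $(a)+(b)+(c) \Rightarrow (d)$ reads: whenever $A$ has RIP with $\delta_{2s} < 1/\sqrt{2}$, $\|\hat x\|_1 \leq \|x\|_1$, and $\|A\hat x - Ax\|_2 \leq 2\epsilon$, one has $\|\hat x - x\|_2 \leq D_1 \epsilon + D_2\, \sigma_s(x)/\sqrt{s}$. As the paragraph preceding the corollary emphasizes, the Cai--Zhang argument extracts only the inequalities (b) and (c) from the $\ell_1$ minimization constraint and never uses the minimization identity itself, so Theorem \ref{thm:cai} proves exactly this implication.

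With the hypothesis of Proposition \ref{pro:main} in hand, the conclusion $(e)+(b)+(f) \Rightarrow (g)$ is immediate. In our instantiation $(e)+(b)+(f)$ is precisely the assumption of the corollary -- $\Phi$ has RIP-NSP with $\delta_{2s} < 1/\sqrt{2}$, $\|\hat x\|_1 \leq \|x\|_1$, and $\|\Phi\hat x - \Phi x\|_2 \leq 2\epsilon$ -- while (g), under the same parameter choice, unpacks to the bound \eqref{eq:RIPclass}. So no additional computation is required beyond citing the proposition.

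The only real obstacle is bookkeeping: confirming that the Cai--Zhang argument is driven solely by $(a)+(b)+(c)$ rather than by full $\ell_1$-feasibility, and that the constants $D_1, D_2$ propagate unchanged through the reduction in Proposition \ref{pro:main}. All substantive work -- constructing a matrix $U$ with $A = U\Phi$ for an RIP matrix $A$ sharing $\ker\Phi$, controlling $\|U\|_{op}$ by $\|A\|_{op}/\lambda(\Phi)$, and invoking Lemma \ref{lem:op} to replace $\|A\|_{op}$ by a quantity of order $\sqrt{N/s}\sqrt{1+\delta_{2s}}$ -- is absorbed into Proposition \ref{pro:main} itself, making the corollary's own proof a brief invocation.
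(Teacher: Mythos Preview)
Your proposal is correct and mirrors the paper's own derivation exactly: the corollary is obtained by instantiating Proposition~\ref{pro:main} with $\|\cdot\|_*=\|\cdot\|_2$, $\rho=\epsilon$, $C_1=D_1$, $C_2=D_2/\sqrt{s}$, and invoking Theorem~\ref{thm:cai} (whose proof uses only (b) and (c)) to supply the implication $(a)+(b)+(c)\Rightarrow(d)$. The only minor slip is in your parenthetical description of the proposition's internals---the paper writes $\Phi=UA$ and bounds $\lambda(U)$ from below, not $\|U\|_{op}$---but this does not affect the corollary's proof, which is purely a citation.
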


Rewriting Corollary \ref{lm:cm} in the language of model \eqref{equ:le} yields Theorem \ref{thm:sta}.

\begin{proof}[Proof of Proposition \ref{pro:main}]
The assumption (e) implies that there exists an invertible matrix $U$ such that $\Phi=UA$ and $A$ has property RIP with $\delta_{2s}$. Moreover,
\begin{equation}\notag
\|Ax-A\hat x\|_2=\|U^{-1}\Phi x-U^{-1}\Phi\hat x\|_2\leq \frac{1}{\lambda(U)}\|\Phi x-\Phi\hat x\|_2\leq \frac{2\epsilon}{\lambda(U)}.
\end{equation}

At this point, assumptions (a) (b) and (c) are satisfied with $\rho=\frac{\epsilon}{\lambda(U)}$, therefore we have
\begin{equation}\label{equ:error}
\|\hat x-x\|_*\leq C_1\frac{\epsilon}{\lambda(U)}+C_2\sigma_s(x).
\end{equation}
The rest of the proof is to estimate $\lambda(U)$. Note that

$\lambda^2(\Phi)=\min_{\|x\|=1}\langle UAA^*U^*x,x\rangle=\min_{\|x\|=1}\langle AA^*U^*x,U^*x\rangle\leq\|AA^*\|_{op}\min_{\|x\|=1}\|U^*x\|^2=\|A\|_{op}^2\lambda(U)^2$. Therefore, with Lemma \ref{lem:op},
\begin{equation}\label{equ:lambda}\lambda(U)\geq \lambda(\Phi)/\|A\|_{op}\geq\frac{\lambda(\Phi)}{\sqrt{\frac{N}{2s}+1}\sqrt{1+\delta_{2s}}} .
\end{equation}

Plug \eqref{equ:lambda} into \eqref{equ:error}, we get the desired error bound (g).
\end{proof}

\begin{remark}
The proof only uses the RIP property implicitly in terms of Lemma \ref{lem:op}. Therefore Proposition \ref{pro:main} could be stated more generally as replacing RIP by any property that produces a bound like \eqref{equ:boundphi}.
\end{remark}

\subsection{RIP-NSP is strictly stronger than NSP}

This is done by Theorem \ref{thm:sta} with explicit construction. The idea is to construct a matrix with NSP, but does not satisfy the bound \eqref{equ:ripnsp}, therefore does not satisfy RIP-NSP. This says that NSP is strictly weaker than RIP-NSP.

\begin{lemma}[Candes~\cite{RIP08}]\label{lm:candes}
Assume $\Phi\in \mathbb{R}^{M\times N}$ satisfies RIP of order $2s$ with constant $\delta_{2s}\in (0,1)$. Then $\Phi$ has the NSP of order $s$ with constant $\gamma=\frac{\sqrt{2}\delta_{2s}}{1-\delta_{2s}}$.
\end{lemma}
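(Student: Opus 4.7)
The plan is to verify the null-space inequality~\eqref{equ:gamma} directly with the stated $\gamma$. Fix $v\in\ker\Phi$ and $T\subset\{1,\ldots,N\}$ with $|T|\le s$. Without loss of generality take $|T|=s$ (padding $T$ with indices outside can only shrink the right-hand side). Set $T_0=T$, list the coordinates of $T^c$ in order of decreasing $|v_i|$, and partition $T^c$ into consecutive blocks $T_1,T_2,\ldots$ of size $s$ (the final block possibly smaller). Write $T_{01}=T_0\cup T_1$, so $|T_{01}|\le 2s$. By Cauchy--Schwarz, $\|v_T\|_1\le \sqrt{s}\,\|v_T\|_2\le \sqrt{s}\,\|v_{T_{01}}\|_2$, which reduces the claim to an $\ell_2$ bound on $v_{T_{01}}$.

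Next I would exploit $v\in\ker\Phi$ via $\Phi v_{T_{01}}=-\sum_{j\ge 2}\Phi v_{T_j}$, and apply the standard RIP polarization inequality
\[
|\langle \Phi v_S,\Phi v_{S'}\rangle|\le \delta_{2s}\|v_S\|_2\|v_{S'}\|_2\quad\text{for disjoint }S,S'\text{ with }|S|+|S'|\le 2s,
\]
to each pair $(T_0,T_j)$ and $(T_1,T_j)$. This yields
\[
\|\Phi v_{T_{01}}\|_2^2=-\Big\langle \Phi v_{T_0}+\Phi v_{T_1},\sum_{j\ge 2}\Phi v_{T_j}\Big\rangle\le \delta_{2s}\big(\|v_{T_0}\|_2+\|v_{T_1}\|_2\big)\sum_{j\ge 2}\|v_{T_j}\|_2.
\]
A second Cauchy--Schwarz gives $\|v_{T_0}\|_2+\|v_{T_1}\|_2\le\sqrt 2\,\|v_{T_{01}}\|_2$, while the RIP lower bound on the $2s$-sparse vector $v_{T_{01}}$ gives $(1-\delta_{2s})\|v_{T_{01}}\|_2^2\le \|\Phi v_{T_{01}}\|_2^2$. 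Dividing by $\|v_{T_{01}}\|_2$ produces $(1-\delta_{2s})\|v_{T_{01}}\|_2\le \sqrt 2\,\delta_{2s}\sum_{j\ge 2}\|v_{T_j}\|_2$.

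To close the argument I would invoke the block-decay estimate $\|v_{T_j}\|_2\le s^{-1/2}\|v_{T_{j-1}}\|_1$ valid for $j\ge 2$ (every entry on $T_j$ is no larger than the average magnitude on $T_{j-1}$, by the sorting). Summing telescopes to $\sum_{j\ge 2}\|v_{T_j}\|_2\le s^{-1/2}\|v_{T^c}\|_1$. Plugging this into the previous display and multiplying through by $\sqrt s$ produces $\|v_T\|_1\le \frac{\sqrt 2\,\delta_{2s}}{1-\delta_{2s}}\|v_{T^c}\|_1$, which is the claimed NSP with $\gamma=\sqrt 2\,\delta_{2s}/(1-\delta_{2s})$.

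The one nontrivial input is the RIP polarization inequality: RIP is a diagonal energy bound on sparse vectors, and one has to extract from it an off-diagonal bound on inner products supported on disjoint sparse sets. The standard route is to apply the parallelogram identity to the $2s$-sparse unit vectors $v_S/\|v_S\|_2\pm v_{S'}/\|v_{S'}\|_2$. This is entirely mechanical, but it is the only place where RIP enters beyond a single-vector energy bound, and it is the source of the factor $\sqrt 2$ in $\gamma$; every other step is bookkeeping of the Cand\`es block decomposition.
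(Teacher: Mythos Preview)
Your argument is correct and is precisely the standard Cand\`es block-decomposition proof from \cite{RIP08}: sort the tail, partition into size-$s$ blocks, use the RIP polarization inequality on disjoint pairs, and close with the block-decay estimate $\|v_{T_j}\|_2\le s^{-1/2}\|v_{T_{j-1}}\|_1$. The paper itself does not supply a proof of this lemma---it is quoted as a cited result from \cite{RIP08}---so there is nothing further to compare; your write-up simply reproduces the argument behind the citation.
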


\begin{lemma}[Rudelson et al. \cite{RV08}]\label{lm:rv}
For any fixed absolute  constant $\delta<1$, when $M,N,s$ satisfy $M\geq C_1t s \log ^4 N$, the random partial Fourier matrix formed by randomly choosing $M$ rows from an $N\times N $ DFT matrix satisfies the RIP with constant $\delta_{2s}\leq \delta $ with probability $1- 5e^{-\frac{\delta^2t}{C_2}}$, where $C_1,C_2$ are absolute constants.

\end{lemma}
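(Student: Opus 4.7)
The plan is to follow the approach of Rudelson and Vershynin, which reduces the RIP statement to controlling a single supremum of a Rademacher-type process indexed by sparse unit vectors, and then bounding that supremum via a covering/chaining argument together with a concentration inequality. Throughout, let $\Phi = \sqrt{N/M}\,F_\Omega$, where $F_\Omega$ consists of the rows of the $N\times N$ DFT matrix indexed by $\Omega$, a set of $M$ indices drawn uniformly (with replacement, for simplicity). The starting observation is that the RIP bound $\delta_{2s}(\Phi)\leq \delta$ is equivalent to
\begin{equation*}
\sup_{|T|\leq 2s}\;\bigl\|\Phi_T^{*}\Phi_T - I_T\bigr\|_{op} \leq \delta,
\end{equation*}
where $\Phi_T$ denotes the column submatrix indexed by $T$. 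Since the rows of $\Phi$ are i.i.d., $\Phi_T^{*}\Phi_T-I_T = \frac{1}{M}\sum_{i=1}^M (X_i^{(T)}X_i^{(T)*}-I_T)$ is a centered sum of independent, uniformly bounded rank-one matrices (each entry of a DFT row has modulus $1/\sqrt{N}$, so $\|X_i^{(T)}\|_2^2 = 2s N/M$ times a harmless factor once rescaled).

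The first main step is to pass from the random sum to a Rademacher process by a standard symmetrization, yielding
\begin{equation*}
\mathbb{E}\sup_{|T|\leq 2s}\bigl\|\Phi_T^{*}\Phi_T-I_T\bigr\|_{op}
\;\lesssim\; \frac{1}{M}\,\mathbb{E}_X\mathbb{E}_\varepsilon \sup_{|T|\leq 2s}\Bigl\|\sum_{i=1}^M \varepsilon_i X_i^{(T)} X_i^{(T)*}\Bigr\|_{op},
\end{equation*}
with $\varepsilon_i$ independent $\pm 1$. The inner supremum can be rewritten as a supremum of a subgaussian process indexed by the set of $2s$-sparse unit vectors $D_{2s}=\{x\in\R^N:\|x\|_0\leq 2s,\;\|x\|_2=1\}$, since $\|A\|_{op}=\sup_{x\in D_{2s}}\langle Ax,x\rangle$ when $A$ is supported on rows and columns in $T$.

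The second main step is to control this supremum by Dudley's entropy integral. The covering numbers of $D_{2s}$ satisfy $\log\mathcal{N}(D_{2s},\|\cdot\|_2,\eta)\lesssim 2s\log(N/(s\eta))$, obtained by combining the $\binom{N}{2s}$ choices of support with a standard volumetric $\eta$-net on each $2s$-dimensional unit sphere. Feeding this into Dudley's bound, together with the Bernoulli process estimate of Rudelson's lemma $\mathbb{E}_\varepsilon\|\sum\varepsilon_i x_i x_i^{*}\|\lesssim \sqrt{\log N}\,\max_i\|x_i\|\,\|\sum x_i x_i^{*}\|^{1/2}$, produces an expected deviation of order
\begin{equation*}
\mathbb{E}\sup_{|T|\leq 2s}\bigl\|\Phi_T^{*}\Phi_T-I_T\bigr\|_{op}\;\lesssim\;\sqrt{\frac{s\log^4 N}{M}},
\end{equation*}
after an elementary algebraic manipulation that self-bounds the quadratic term (the $\|\Phi_T^*\Phi_T\|$ factor on the right is absorbed into the left-hand side using $\|\Phi_T^*\Phi_T\|\leq 1+\|\Phi_T^*\Phi_T-I_T\|$). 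Choosing $M\geq C_1 t s\log^4 N$ with $C_1$ sufficiently large forces this expected value below $\delta/2$.

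The final step is concentration around the expectation. Since the random matrix $\Phi_T^{*}\Phi_T-I_T$ is a sum of independent bounded self-adjoint contributions, Talagrand's inequality for empirical processes (or equivalently the matrix Bernstein inequality applied after truncation) yields a subexponential tail of the form $\exp(-c\delta^2 M/(s\log^4 N))$ on the event that the supremum exceeds $\delta$, and balancing this with the hypothesis $M\geq C_1 t s\log^4 N$ produces a failure probability of at most $5 e^{-\delta^2 t/C_2}$, as claimed.

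The main obstacle is obtaining the correct $\log^4 N$ power rather than the much easier $\log^6 N$ of the original Candes-Romberg-Tao proof; this requires executing the chaining carefully at the sparse-vector level, rather than applying a crude union bound across the $\binom{N}{2s}$ supports, and it is exactly the step where Rudelson's lemma must be invoked in place of a cruder noncommutative Khintchine estimate.
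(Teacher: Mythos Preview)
The paper does not give its own proof of this lemma: it is quoted directly from Rudelson--Vershynin \cite{RV08}, with only the remark that the published argument, stated there for $\delta=0.5$, extends to arbitrary $\delta<1$. So there is nothing in the paper to compare against beyond the citation itself.

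Your sketch is a faithful outline of the Rudelson--Vershynin argument: symmetrization to a Rademacher process, passage to a supremum over $2s$-sparse unit vectors, Dudley chaining combined with Rudelson's lemma to get the $\sqrt{s\log^4 N/M}$ expectation bound, the self-bounding trick to close the implicit quadratic, and Talagrand-type concentration for the tail. Two small imprecisions worth noting: first, the entries of $\Phi$ have modulus $1/\sqrt{M}$ after rescaling, so $\|X_i^{(T)}\|_2^2 = 2s/M$, not $2sN/M$; second, the relevant covering numbers in the chaining step are for $D_{2s}$ in the norm $\|x\|_X := \max_i |\langle X_i, x\rangle|$ induced by the rows, not in $\|\cdot\|_2$, and it is the Carl--Maurey bound (together with the flat structure of the Fourier rows) that yields the right entropy estimate. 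These are the places where the $\log^4 N$ rather than $\log^6 N$ actually comes from, so in a full write-up you would want to be precise there; as a high-level plan, however, your proposal matches the cited proof.
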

The result in \cite{RV08} actually only directly proves the lemma for $\delta=0.5$, but the same proof works for general $\delta$. Moreover, the order of $M$ in Lemma \ref{lm:rv} has been slightly improved recently by Bourgain \cite{B14}, and subsequently by Haviv et al. \cite{HR15}. The interested readers can thus obtain an improved version of Theorem \ref{thm:no}. But we still choose to include the original Rudelson and Vershynin result here since this improvement is not the main point of this paper.

In this paper, we work with real vector spaces due to the restriction of Theorem \ref{thm:cai}, so we need to have a lemma that can turn complex matrices to real matrices while preserving the restricted isometry property.

\begin{lemma}\label{lm:real}
If $F=A+Bi\in\C^{M\times N}$ has RIP with $\delta_s$, then so is the real matrix $R=\left[\begin{array}{c}A\\ B\end{array}\right]$.
\end{lemma}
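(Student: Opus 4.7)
The plan is to exploit the elementary identity $\|Fv\|_\C^2 = \|Av\|_2^2 + \|Bv\|_2^2 = \|Rv\|_2^2$ whenever $v$ is a \emph{real} vector, and then simply restrict the hypothesized complex RIP to real sparse inputs.

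More precisely, I would fix an arbitrary $s$-sparse $v \in \R^N$ and view it as a complex vector with zero imaginary part. Writing $Fv = Av + iBv$, the $j$-th entry has squared modulus $(Av)_j^2 + (Bv)_j^2$, so summing over $j$ gives
\begin{equation}\notag
\|Fv\|_2^2 \;=\; \|Av\|_2^2 + \|Bv\|_2^2 \;=\; \left\|\begin{bmatrix} A \\ B \end{bmatrix} v\right\|_2^2 \;=\; \|Rv\|_2^2 .
\end{equation}
Since the complex and real $\ell_2$ norms of a real vector coincide, $\|v\|_2^2$ is unchanged by this viewpoint.

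Now the RIP hypothesis on $F$ is assumed to hold for all complex $s$-sparse vectors, and real $s$-sparse vectors form a subset of these. Applying the hypothesis to $v$ and substituting the identity above yields
\begin{equation}\notag
(1-\delta_s)\|v\|_2^2 \;\le\; \|Rv\|_2^2 \;\le\; (1+\delta_s)\|v\|_2^2
\end{equation}
for every $s$-sparse $v \in \R^N$, which is precisely the claim that $R$ has RIP with the same constant $\delta_s$. There is no genuine obstacle here; the only subtlety to flag is that one must know (or spell out) that RIP for a complex matrix is quantified over all complex sparse vectors, so that restricting to real inputs is legitimate.
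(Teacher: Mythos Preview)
Your proof is correct and is essentially the same as the paper's: both hinge on the identity $\|Fv\|_2^2=\|Av\|_2^2+\|Bv\|_2^2=\|Rv\|_2^2$ for real $v$, then invoke the RIP of $F$ on real $s$-sparse inputs. The paper phrases it via column submatrices $F_T$ acting on $x\in\R^s$, while you work with full $s$-sparse vectors in $\R^N$, but this is only a notational difference.
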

\begin{proof}
By the definition of RIP, we need to prove for every index set $T\in \{1,...,M\}$ with cardinality $s$, it holds that \begin{equation*}
(1-\delta_s)\|x\|_2^2\leq\|R_Tx\|_2^2\leq (1+\delta_s)\|x\|_2^2.
\end{equation*}
for any $x\in \R^{s} $.
But this is true if we apply RIP of $F$ on $x$ and the fact that $\|R_Tx\|_2^2=\|A_Tx\|_2^2+\|B_Tx\|_2^2=\|F_Tx\|_2^2$.
\end{proof}
\begin{corollary}\label{lm:A}
For any given positive constant $\gamma<1$, there exists $C(\gamma) $, such that when $M,s,N$ satisfy $N/2\geq M \geq C(\gamma) s\log ^4 N$, then there exists a matrix $R\in \R^{M \times N}$ that has $(s, \gamma)$-NSP and $(1,...,1)^T\in \ker (R)$.
\end{corollary}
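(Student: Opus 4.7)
The plan is to construct $R$ as a real embedding of a random partial DFT matrix whose row-index set avoids the zero frequency, so that the all-ones vector lies automatically in its kernel; the RIP of the DFT submatrix will then translate into NSP via Lemmas \ref{lm:real} and \ref{lm:candes}.

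First, I would convert the target NSP constant $\gamma$ into a target RIP constant through Lemma \ref{lm:candes}: set $\delta = \gamma/(\gamma + \sqrt{2}) \in (0, \sqrt{2}-1)$, so that any matrix with $\delta_{2s} \leq \delta$ automatically has $(s,\gamma)$-NSP. Next, let $M' = \lfloor M/2 \rfloor$, sample $M'$ rows uniformly without replacement from the $N \times N$ DFT matrix, and let $F \in \C^{M' \times N}$ be the resulting complex matrix. By Lemma \ref{lm:rv}, as long as $M' \geq C_1 t s \log^4 N$, the event $\{\delta_{2s}(F) \leq \delta\}$ occurs with probability at least $1 - 5 e^{-\delta^2 t/C_2}$. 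Using the hypothesis $M \leq N/2$, the event $\{\text{zero-frequency row is not chosen}\}$ has probability $1 - M'/N \geq 3/4$. Picking $t = t(\gamma)$ large enough so that $5 e^{-\delta^2 t/C_2} < 3/4$ and setting $C(\gamma) := 2 C_1 t(\gamma)$, a union bound produces a realization of $F$ that simultaneously satisfies $\delta_{2s}(F) \leq \delta$ and $F(1,\dots,1)^T = 0$ (since every nonzero-frequency DFT row is orthogonal to the all-ones vector).

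Fix such an $F = A + iB$ and set $R = [A; B] \in \R^{2M' \times N}$. Lemma \ref{lm:real} preserves the RIP constant, giving $\delta_{2s}(R) \leq \delta$, and Lemma \ref{lm:candes} then yields $(s,\gamma)$-NSP for $R$. From $F(1,\dots,1)^T = 0$ we read off $A(1,\dots,1)^T = B(1,\dots,1)^T = 0$, so $(1,\dots,1)^T \in \ker R$. If $M$ is odd so that $2M' = M - 1$, I would append a single zero row; this affects neither the kernel nor the RIP constant.

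The only delicate point is calibrating the two events in the union bound. The probability of avoiding the zero-frequency row is essentially $M'/N$, and one needs the slack afforded by $M \leq N/2$ to keep this below a fixed constant while simultaneously pushing the RIP failure probability below the matching constant through the choice of $t$. This is routine, but it is the reason the hypothesis $N/2 \geq M$ shows up in the statement, and it is the one step that I would write out most carefully.
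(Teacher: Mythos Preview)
Your approach is essentially identical to the paper's: choose $\delta=\gamma/(\sqrt{2}+\gamma)$, realize a partial DFT matrix on $\lfloor M/2\rfloor$ rows that simultaneously has RIP (Lemma~\ref{lm:rv}) and avoids the zero-frequency row (using $M\le N/2$), pass to the real stacking via Lemma~\ref{lm:real}, and invoke Lemma~\ref{lm:candes}; for odd $M$ the paper appends an extra row orthogonal to $(1,\dots,1)^T$, of which your zero row is a special case. The only slip is in your closing paragraph, where you wrote that the probability of \emph{avoiding} the zero-frequency row is $M'/N$ rather than $1-M'/N$, but your earlier computation was correct.
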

\begin{proof}
If $M$ is even, then the partial DFT matrix  $F\in\C^{\frac{M}{2}\times N}$ satisfies RIP $\delta_{2s}\leq\delta$ with probability at least $p_1=1-5e^{\frac{-\delta^2t}{C_2}}$ if  $M\geq 2C_1t s \log ^4 N$. We further require $(1,...,1)^T\in \ker (F)$, which is equivalent to not selecting the first row of the DFT matrix. Probability of not selecting the first row $p_2$ is at least $(N-M/2)/N\geq 1/2$. The existence of $F$ such that it has RIP with $\delta_{2s}\leq\delta$ and 
$(1,...,1)^T\in \ker (F)$ is guaranteed if $p_1+p_2\geq1$. This can be done if we set $p_1=1/2$, which requires $M\geq C(\delta)s\log^4 N$, where $C(\delta)=2\log 10 C_1C_2/\delta^2$.

By Lemma \ref{lm:real}, the matrix $R=\left[\begin{array}{c}\text{Real}(F)\\ \text{Im}(F)\end{array}\right]\in\R^{M\times N}$ has RIP with $\delta_{2s}<\delta$ and $(1,...,1)^T\in \ker (R)$. By Lemma \ref{lm:candes}, we get the desired matrix if we set $\delta=\frac{\gamma}{\sqrt 2+\gamma}$, which requires $M\geq C(\gamma) s\log^4N,$ with 
$$C(\gamma)= 2\log 10 C_1C_2\left(\frac{\sqrt{2}+\gamma}{\gamma}\right)^2.$$

If $M$ is odd, we can do exactly the above with the even number $M-1$. So we can get $R'\in\R^{(M-1)\times N}$ that has $(s,\gamma)$-NSP and $(1,...,1)^T\in \ker (R')$. The desired $R$ can be obtained by adding one row that is orthogonal to $(1,\cdots,1)^T$ to $R'$.

\end{proof}

%
%

\begin{proof}[Proof of Theorem \ref{thm:no}]

\textbf{Step 1: Construction of $\Phi$}

The inequality $M\geq 2C(\gamma/3) s\log^4N$ implies $(M-s)\geq C(\gamma/3) s\log^4(N-s)$, so we can find $A\in\R^{(M-s)\times(N-s)}$ that has $(s,\gamma/3)$-NSP and  $e=(1,...,1)^T\in\R^{N-s}\in\ker A$. Define  $$\Nc_e=\{x\in \ker(A): x\perp e\},$$ and let
$$\Nc_e'=\{(\underbrace{0,...,0}_{s},x):x\in \Nc_e\},$$ $$\mathcal{N}=\Nc_e'\oplus \text{span}(d),$$
where $d=\left(\frac{N-4s}{2^2}\gamma,\frac{N-4s}{2^3}\gamma,...,\frac{N-4s}{2^{s+1}}\gamma,\underbrace{-1,....,-1}_{N-s}\right)$.
In addition, if we set $$\varphi_1=\frac{1}{\rho}(\underbrace{\alpha,...,\alpha}_s, \gamma,...,\gamma)$$ with $\alpha=\frac{2(N-s)}{(N-4s)(1-2^{-s})}$ and $\rho$ be a normalization constant such that $\|\varphi_1\|_2=1$, then it is easy to verify that $\varphi_1 \perp \mathcal{N}$, and therefore there exists an orthonormal basis  of $\mathcal{N}^{\perp}$ with $\varphi_1$ being one of the basis vector.
Let $\Phi$ be the matrix whose rows are this orthonormal basis, where $\varphi_1$ is the first row. 

Some quick facts about $\Phi$:
\begin{itemize}
\item $\ker\Phi=\mathcal{N}$.
\item $\Phi\Phi^*=I$, and therefore $\lambda(\Phi)=1$.
\end{itemize}

\noindent\textbf{Step 2: $\Phi$ has $(s,\gamma)$-NSP}

Since $\mathcal{N}=\Nc_e'\oplus d$, we only need to prove any $b=h+d$ satisfies NSP, where $h\in\Nc_e'$. Let $I=\{1,...,s\}$, and we first make two observations: 
\begin{enumerate}
\item[1).] $\|d_I\|_1=\frac{N-4s}{2}\left(1-\frac{1}{2^s}\right)\gamma < \frac{(N-s)\gamma}{2}=\frac{\gamma}{2}\text{sum}(-d_{I^c})=\frac{\gamma}{2}\text{sum}(-d_{I^c}-h)\leq \frac{\gamma}{2}\|d_{I^c}+h\|_1=\frac{\gamma}{2}\|d_{I^c}+h_{I^c}\|_1$, where $\text{sum}(x)$ is the sum of all coordinates of $x$.
\item[2).] $b_{I^c}=h_{I^c}+d_{I^c}\in\ker A$, so if an index set $S\subset I^c$, then $$\|b_S\|_1=\|(b_{I^c})_S\|_1\leq\gamma/3\|(b_{I^c})_{S^c}\|_1=\gamma/3 \|b_{S^c\cap I^c}\|_1.$$
\end{enumerate}

Let $T \in \{1,...,N\}$ be any index set with cardinality $s$, then the above implies
\begin{align*}
\|b_T\|_1&=\|b_{T\cap I}\|_1+\|b_{T\cap I^c}\|_1\leq\|b_I\|_1+\|b_{T\cap I^c}\|_1 = \|d_I\|_1+\|b_{T\cap I^c}\|_1 \\
&< \frac{\gamma}{2} \|(d+h)_{I^c}\|_1+ \|b_{T\cap I^c}\|_1
= \frac{\gamma}{2}\|b_{I^c}\|+\|b_{T\cap I^c}\|_1\\
&=\frac{\gamma}{2}\|b_{I^c\cap T^c}\|+\frac{\gamma}{2}\|b_{I^c\cap T}\|+\|b_{T\cap I^c}\|_1=\frac{\gamma}{2}\|b_{I^c\cap T^c}\|_1+\frac{2+\gamma}{2}\|b_{I^c\cap T}\|_1\\
& \leq \frac{\gamma}{2} \|b_{I^c\cap T^c}\|_1+\frac{(2+\gamma)\gamma/3}{2} \|b_{I^c\cap T^c}\|_1 \leq  \gamma \|b_{I^c\cap T^c}\|_1\leq \gamma \|b_{T^c}\|_1,
\end{align*}

\vspace{0.1in}

\noindent\textbf{Step 3: Recovery with $\Phi$ does not satisfy \eqref{eq:RIPclass}}

Let $x_0=\left( \frac{N-4s}{2^2}\gamma^2, \frac{N-4s}{2^3}\gamma^2,...,\frac{N-4s}{2^{s+1}}\gamma^2, 0,...,0\right)^T$  and $z=\left(\rho,0,...,0\right)^T$. We wish to recover $x_0$ from \eqref{equ:le} with $y=\Phi x_0-z$ and $\epsilon=\|z\|_2=\rho$.

Notice $$\hat x:=x_0-\rho\varphi_1-\gamma d=(\underbrace{-\alpha,-\alpha,....,-\alpha}_s,0,...0)^T$$ is feasible in the problem \eqref{equ:le} because
$$\|\Phi\hat x-y\|_2=\|\Phi x_0-\Phi\rho\varphi_1-\Phi d-\Phi x_0+z\|_2=\|-\rho\Phi\varphi_1+z\|_2=0.$$ The last equality holds because the rows of $\Phi$ are orthogonal and $\varphi_1$ is the first row of $\Phi$.

We wish to get an error bound using Corollary  \ref{lm:cm}, so we need to compare their $\ell_1$ norm.
 
$$\|\hat x\|_1=\alpha s=\frac{2s(N-s)}{(N-4s)(1-2^{-s})}, \|x_0\|_1=\frac{1}{2}(N-4s)(1-2^{-s})\gamma^2.$$

In order to show $\|\hat x\|_1\leq \|x_0\|_1$, it suffices to show that $(N-4s)^2(1-2^{-s})^2\gamma^2>4(N-s)s$, which is true under the assumption that $N\geq 11s/\gamma^2$ and $s\geq 5$.

If $\Phi$ were RIP-NSP with constant $\delta_{2s}$, then by Corollary \ref{lm:cm}, we would get
\begin{equation}\label{equ:bound}
\|\hat{x}-x_0\|_2\leq D_1\sqrt{1+\delta_{2s}} \sqrt {\frac{N}{s}}\frac{\epsilon}{\lambda(\Phi)}=D_1\sqrt{1+\delta_{2s}} \sqrt{\frac{N(N\gamma^2+(\alpha^2-\gamma^2)s)}{s}}
\end{equation}

On the other hand, we can compute that
\begin{align}\label{equ:lower}\notag\|\hat x-x_0\|_2^2&=\|\rho\varphi_1+\gamma d\|^2_2=\sum_{i=1}^s\left(\alpha+\frac{N-4s}{2^{i+1}}\gamma^2 \right)^2\\ 
&=\sum_{i=1}^s(\alpha^2+2\alpha\frac{N-4s}{2^{i+1}}\gamma^2+\frac{(N-4s)^2}{4^{i+1}}\gamma^4 )\geq \frac{(N-4s)^2}{16}\gamma^4.
\end{align}

Under the assumption that $N\geq 11s/\gamma^2$ and $s\geq 5$, we can estimate that $2<\alpha<3$.

Let the right hand side of \eqref{equ:bound} be $R$, so
\begin{equation}\label{equ:upper}
\|\hat x-x_0\|_2^2\leq R^2= C\frac{N(N\gamma^2+(\alpha^2-\gamma^2)s)}{s}\leq \frac{2C\gamma^2N^2}{s},
\end{equation} where $C=\frac{8(1+\delta_{2s})^2}{(1-\sqrt{2}\delta_{2s})^2}$ and we used the relation $N\geq 11s/\gamma^2$.

If $s>80C/\gamma^2$, then we arrive at a contradiction between \eqref{equ:lower} and \eqref{equ:upper}.

\end{proof}





\begin{thebibliography}{9}

\bibitem{BCTT11}
J. D. Blanchard, C. Cartis, J. Tanner, A. Thompson. \emph{Phase transitions for greedy sparse approximation algorithms.} Applied and Computational Harmonic Analysis 30.2 (2011): 188-203.

\bibitem{B14}
J. Bourgain. \emph{An improved estimate in the restricted isometry problem.} Geometric Aspects of Functional Analysis. Springer International Publishing, 2014. 65-70.

\bibitem{CM14}
J. Cahill, and D. G. Mixon. \emph{Robust width: A characterization of uniformly stable and robust compressed sensing.} arXiv preprint arXiv:1408.4409 (2014).

\bibitem{CT05}
E. Candes, and T. Tao. \emph{Decoding by linear programming.} Information Theory, IEEE Transactions on 51.12 (2005): 4203-4215.

\bibitem{CZ14}
Tony. Cai, and Anru Zhang. Sparse representation of a polytope and recovery of sparse signals and low-rank matrices. IEEE transactions on information theory (2014): 1-1.

\bibitem{RIP08}
   Emmanuel J. Candes \emph{The restricted isometry property and its implications for compressed sensing.} Comptes Rendus Mathematique 346.9 (2008): 589-592.
   
\bibitem{CR05}
 Emmanuel J. Candes, and Justin Romberg. \emph{$\ell_1$-magic: Recovery of sparse signals via convex programming.} URL: www. acm. caltech. edu/l1magic/downloads/l1magic. pdf 4 (2005): 14.
   
\bibitem{chen}
Xuemei Chen. Stability of compressed sensing for dictionaries and almost sure convergence rate for the Kaczmarz algorithm. Diss. Vanderbilt University, 2012.

\bibitem{CWW14}
Xuemei Chen, Haichao Wang, and Rongrong Wang. A null space analysis of the $\ell_1$-synthesis method in dictionary-based compressed sensing. Applied and Computational Harmonic Analysis 37.3(2014): 492-515.
\bibitem{CDD09}
Albert Cohen, Wolfgang Dahmen, and Ronald DeVore. Compressed sensing and best $k$-term approximation. Journal of the American Mathematical Society 22.1 (2009): 211-231.

\bibitem{DG09}
M. E. Davies, R\'{e}mi Gribonval. \emph{Restricted isometry constants where sparse recovery can fail for.} Information Theory, IEEE Transactions on 55.5 (2009): 2203-2214.

\bibitem{DLR15}
S. Dirksen, G. Lecu\'{e}, and H. Rauhut. \emph{On the gap between RIP-properties and sparse recovery conditions.} arXiv preprint arXiv:1504.05073 (2015).

\bibitem{F14}
S. Foucart. \emph{Stability and robustness of $\ell_1$-minimizations with Weibull matrices and redundant dictionaries.} Linear Algebra and its Applications 441 (2014): 4-21.

\bibitem{GN03}
R. Gribonval, and M. Nielsen. \emph{Sparse representations in unions of bases.} Information Theory, IEEE Transactions on 49.12 (2003): 3320-3325.

\bibitem{HR15}
I. Haviv,  and O. Regev. \emph{The Restricted Isometry Property of Subsampled Fourier Matrices.} arXiv preprint arXiv:1507.01768 (2015).

\bibitem{KT07}
B. S. Kashin, and Vladimir N. Temlyakov. \emph{A remark on compressed sensing.} Mathematical notes 82.5-6 (2007): 748-755.

\bibitem{KW13}
F. Krahmer, and R. Ward. \emph{Stable and robust sampling strategies for compressive imaging.} Image Processing, IEEE Transactions on 23.2 (2014): 612-622.

\bibitem{LM14}
G. Lecu\'{e}, , and S. Mendelson. \emph{Sparse recovery under weak moment assumptions.} Journal of the European Mathematical Society, to appear.
   
\bibitem{RV08}
Mark Rudelson and Roman Vershynin. \emph{On sparse reconstruction from Fourier and Gaussian measurements.} Communications on Pure and Applied Mathematics 61.8 (2008): 1025-1045.

\bibitem{S11}
Q. Sun. \emph{Sparse approximation property and stable recovery of sparse signals from noisy measurements.} Signal Processing, IEEE Transactions on 59.10 (2011): 5086-5090.

\bibitem{S12}
Q. Sun. \emph{Recovery of sparsest signals via $\ell_q$-minimization.} Applied and Computational Harmonic Analysis 32.3 (2012): 329-341.

\end{thebibliography}
\end{document}